\newtheorem{theorem}{Theorem}
\newtheorem{lemma}{Lemma}
\newtheorem{remark}{Remark}
\newtheorem{proof}{Proof}
\title{A new construction of an MDS convolutional code of rate $1/2$}
\author{Zita Abreu, Raquel Pinto, Rita Sim\~oes}
\date{}
\begin{document}

\maketitle
\begin{abstract}
Maximum distance separable convolutional codes are characterized by the property that the free distance reaches the generalized Singleton bound, which makes them optimal for error correction. However, the existing constructions of such codes are available over fields of large size. In this paper, we present the unique construction of MDS convolutional codes of rate $1/2$ and degree $5$ over the field $\mathbb{F}_{11}$.
\end{abstract}

\textbf{Keywords:} Convolutional codes; free distance; generalized Singleton bound; maximum distance separable (MDS) codes.

2000 Mathematics Subject Classification: 94B10, 11T71

\section{Introduction}
Nowadays, all communication systems that work with digitally represented data require the use of error correction codes because all real channels are noisy. One type of error-correcting codes is the convolutional codes. The class of these classical codes is extensively investigated in the literature \cite{bb,cc}. One of the main objectives at the moment is to build codes of a certain rate and degree having as large distance as possible. The distance of a convolutional code measures the robustness of the code since it provides a means to assess its capability to protect data from errors. Codes with longer distance are better because they allow to correct more errors. One type of distance for convolutional codes is the free distance, which is considered for decoding (the process of error correction) when the codeword is fully received. Convolutional codes with maximal free distance (with a certain rate and degree) are called Maximum Distance Separable Codes (MDS). These codes are the ones that present the best performance in error correction among all convolutional codes with fixed rate. 

Up to now, there are not many known constructions of MDS convolutional codes. The first construction was obtained by Justesen in \cite{h} for codes of rate $1/n$ and restricted degrees. In \cite{i} Smarandache and Rosenthal presented constructions of convolutional codes of rate $1/n$  and arbitrary degree $\delta$. However, these constructions require a larger field size than the constructions obtained in \cite{h}. Later, Gluesing-Luerssen and Langfeld presented in \cite{l} a novel construction of convolutional codes of rate $1/n$ with the same field size as the ones obtained in \cite{h} but also with a restriction on the degree of the code. After that, Gluesing-Luerssen, Smarandache and Rosenthal \cite{c} constructed MDS convolutional codes for arbitrary parameters. Lieb and Pinto \cite{d} defined a new construction of convolutional codes of any degree and sufficiently low rate using superregular matrices with a specific property. 

In code constructions, the size of the field is very important for practical implementations since it is directly connected with the computational efficiency of the encoding and decoding algorithms and the complexity of the decoding algorithm, which grows as the size of the field does. In this paper, we present the unique construction of an MDS $(2,1,5)$ convolutional codes over the field $\mathbb{F}_{11}$. The interest of this construction lies in the fact that up to date there is no  constructions for convolutional codes with the same parameters over any field smaller than $\mathbb F_{11}$. 

%Moreover it does not exist any MDS $(2,1,5)$ convolutional code over a field smaller than $\mathbb F_{11}$. 
%The main idea is to consider a generator matrix $G(D)$ of an MDS convolutional code over $\mathbb{F}_{q}$, with $q=11, 13$ and construct a generator matrix of a new MDS convolutional code over the same field and of higher degree, by repeating the coefficients of $G(D)$ of certain degrees. 

%\textcolor{red}{The interest of this construction lies in the fact that up to date there is no MDS construction for the same parameters over F11.}

\section{Preliminaries}
A \textbf{convolutional code} $\mathcal{C}$ of rate $k/n$ is an $\mathbb{F}_{q}[D]$-submodule of $\mathbb{F}_{q}[D]^{n}$ of rank $k$, where $\mathbb{F}_{q}[D]$ is the ring of polynomials with coefficients in the field $\mathbb{F}_{q}$. A $k \times n$ matrix $G(D)$ with entries in $\mathbb{F}_{q}[D]$ whose rows constitute a basis of $\mathcal{C}$ is called a \textbf{generator matrix} for $\mathcal{C}$. This matrix is a full row rank matrix such that \begin{eqnarray}
\mathcal{C} &=&\ \text{Im}_{\mathbb{F}_{q}[D]} \ G(D)\nonumber\\ & = & \{v(D) \in \mathbb{F}_{q}[D]^{n}: v(D) = u(D)G(D) \text{ with } u(D) \in \mathbb{F}_{q}[D]^{k}\}.\nonumber
\end{eqnarray}

Two generator matrices $G_{1}(D),\ G_{2}(D) \in \mathbb{F}_{q}[D]^{k \times n}$ are said to be \textbf{equivalent generator matrices} if $\text{Im}_{\mathbb{F}_{q}[D]} \ G_{1}(D)=\text{Im}_{\mathbb{F}_{q}[D]} \ G_{2}(D),$ which happens if and only if $G_{1}(D) = U(D)G_{2}(D)$ for some unimodular matrix (square polynomial matrix with determinant in $\mathbb{F}_{q}\setminus\{ 0\})$ $U(D) \in \mathbb{F}_{q}[D]^{k \times k}$.

Since two equivalent generator matrices differ by left multiplication with a unimodular matrix, they have equal $k \times k$ full-size minors, up to the multiplication by a nonzero constant. 
The maximum degree of the full-size minors of a generator matrix %(called internal degree)
of $\mathcal{C}$ is called the \textbf{degree} of $\mathcal{C}$, and it is normally denoted by $\delta$. Additionally, a convolutional code of rate $k/n$ and degree $\delta$ is also denoted by an $(n, k, \delta)$ convolutional code. 

A matrix $G(D) \in \mathbb{F}[D]^{k \times n}$ is said to be \textbf{left prime} if $G(D) = X(D) \tilde{G}(D)$ for some $X(D) \in \mathbb{F}[D]^{k \times k}$ and $\tilde{G}(D) \in \mathbb{F}[D]^{k \times n}$, then $X(D)$ is unimodular.

Since two equivalent generator matrices differ by left multiplication by a unimodular matrix, if a convolutional code admits a left prime generator matrix then all its generator matrices are left prime and the code is said to be \textbf{noncatastrophic}. A convolutional code that does not admit a left prime generator matrix is said to be \textbf{catastrophic}.

The \textbf{free distance} of a convolutional code measures its capability of detecting and correcting errors introduced during information transmission through a noisy channel and it is defined as $$d_{free}(\mathcal{C})= \min \{wt(v(D)) | v(D) \in \mathcal{C}, v(D) \neq 0\},$$ where $wt(v(D))$ is the Hamming weight of %a polynomial vector 
$v(D) = \sum_{t=0}^{\deg (v(D))}v_{t}D^{t} \in \mathbb{F}_{q}^{n}[D]$ that is defined as $wt(v(D)) = \sum^{\deg(v(D))}_{t=0} wt (v_t),$ where the weight $wt(v)$ of $v \in \mathbb{F}_{q}^{n}$ is the number of nonzero components of $v$.  

Once channel transmission is complete, $\mathcal{C}$ can detect up to $s$ errors in any received word $w(D)$ if $d_{free}(\mathcal{C}) \geq s + 1$ and can correct up to $t$ errors in $w(D)$ if $d_{free}(\mathcal{C}) \geq 2t + 1$.

In \cite{g} Smarandache and Rosenthal obtained an upper bound for the free distance of an $(n,k,\delta)$ convolutional code $\mathcal{C}$ %of rate $k/n$ and degree $\delta$ 
given by $$d_{free}(\mathcal{C}) \leq (n-k)\Big( \Big\lfloor \frac{\delta}{k} \Big\rfloor +1 \Big) + \delta +1.$$

This bound is called the \textbf{generalized Singleton bound}. An $(n,k,\delta)$ convolutional
code % of rate $k/n$ and degree $\delta$ 
with free distance equal to the generalized
Singleton bound is called \textbf{Maximum Distance Separable (MDS)} convolutional code. Note that an MDS $(n,1,\delta)$ convolutional code has free distance equal to $n(\delta +1)$.

\section{A construction of MDS Convolutional Codes}
\label{sec:3}

In this section will give a construction of an MDS $(2,1,5)$  convolutional code over $\mathbb{F}_{11}$. First, we will state the following trivial result which will be recurrently used in the proof of the next theorem.

\begin{lemma}
\label{lema}
Let $ G_0 = \begin{bmatrix}
8 & 8
\end{bmatrix}\in \mathbb{F}_{11}^{1 \times 2}$, $ G_1 = \begin{bmatrix}
5 & 6
\end{bmatrix} \in \mathbb{F}_{11}^{1 \times 2}$ and $ G_2 = \begin{bmatrix}
1 & 1
\end{bmatrix} \in \mathbb{F}_{11}^{1 \times 2}$.  and $a,b,c \in \mathbb{F}_{11}.$ Then
\begin{enumerate}
\item if $a$ or $b$ are different from zero, then $wt(aG_{0}+bG_{1}) \geq 1$ and $wt(aG_{1}+bG_{2}) \geq 1$.
\item if $b$ is different from zero, then $wt(aG_{0}+bG_{1}+cG_{2}) \geq 1$.
\end{enumerate}
\end{lemma}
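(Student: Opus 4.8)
The plan is to exploit the fact that a row vector in $\mathbb{F}_{11}^{1\times 2}$ has weight at least $1$ precisely when it is nonzero, i.e.\ when at least one of its two entries is nonzero. So in every case it suffices to certify that the indicated linear combination is not the zero vector, which I would do either by a $2\times 2$ determinant (linear independence) argument or by inspecting the difference of the two coordinates.

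For the two statements in part (1), I would argue via linear independence of the relevant pairs of generators. Consider the matrices $\begin{bmatrix} G_0 \\ G_1 \end{bmatrix} = \begin{bmatrix} 8 & 8 \\ 5 & 6 \end{bmatrix}$ and $\begin{bmatrix} G_1 \\ G_2 \end{bmatrix} = \begin{bmatrix} 5 & 6 \\ 1 & 1 \end{bmatrix}$, whose determinants are $8\cdot 6 - 8\cdot 5 = 8$ and $5\cdot 1 - 6 \cdot 1 = -1$ respectively, both nonzero in $\mathbb{F}_{11}$. Hence in each case the two rows are linearly independent over $\mathbb{F}_{11}$, so whenever $(a,b)\neq(0,0)$ the combinations $aG_0+bG_1$ and $aG_1+bG_2$ are nonzero vectors and therefore have weight at least $1$. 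Since the hypothesis ``$a$ or $b$ different from zero'' is exactly the condition $(a,b)\neq(0,0)$, this settles both assertions at once.

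For part (2), a determinant is unavailable (there are three coefficients and only two coordinates), so instead I would write out the two entries of $aG_0+bG_1+cG_2$ directly as $8a+5b+c$ and $8a+6b+c$, and observe that their difference is exactly $b$. Consequently, if $b\neq 0$ the two entries cannot both vanish, so the vector is nonzero and $wt(aG_0+bG_1+cG_2)\geq 1$. The key structural reason is that $G_0$ and $G_2$ have equal coordinates, so only the $bG_1$ term can create a discrepancy between the two positions, and that discrepancy equals $b$.

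As for the main obstacle: there really is none beyond a couple of arithmetic checks modulo $11$, which is why the statement is billed as a trivial result. The determinant evaluations and the single subtraction yielding $b$ are immediate, so I expect the full proof to reduce to a one- or two-line verification in each case.
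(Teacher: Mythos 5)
Your proof is correct and rests on the same trivial structural facts the paper invokes ($G_1$ and $G_2$ linearly independent, and $G_0$, $G_2$ proportional to $\begin{bmatrix}1 & 1\end{bmatrix}$ so that only the $bG_1$ term can separate the two coordinates); the paper simply states these facts and calls the rest immediate, while you spell out the determinants and the coordinate subtraction. No substantive difference.
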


\begin{proof}
It immediately follows from the fact that $G_0$ is a multiple of $G_2$ and that $G_1$ and $G_2$ are linearly independent.
\end{proof}

In the next theorem, we present the first construction up to now of an MDS $(2,1,5)$ convolutional code over the field $\mathbb{F}_{11}$. This is the first construction of an MDS $(2,1,5)$ convolutional code in the literature with a relatively high degree over a small field, see \cite{c},\cite{d},\cite{h} and \cite{l}.

\begin{theorem}
\label{teorema3}
Let $${G}(D) = G_0 + G_1D+ G_2D^2+ G_2D^3+G_1D^4+G_0D^5,$$ with $ G_0 = \begin{bmatrix}
8 & 8
\end{bmatrix}\in \mathbb{F}_{11}^{1 \times 2}$, $ G_1 = \begin{bmatrix}
5 & 6
\end{bmatrix} \in \mathbb{F}_{11}^{1 \times 2}$ and $ G_2 = \begin{bmatrix}
1 & 1
\end{bmatrix} \in \mathbb{F}_{11}^{1 \times 2}$. The $(2, 1, 5)$ convolutional code $\mathcal{C} =\text{Im}_{\mathbb{F}_{11}[D]} \ {G}(D)$ is MDS.
\end{theorem}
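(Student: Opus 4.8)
The plan is to prove that $d_{free}(\mathcal{C}) = 12$, which is exactly the generalized Singleton bound for a $(2,1,5)$ code, since $n(\delta+1) = 2\cdot 6 = 12$. The upper bound is immediate: the input $u(D)=1$ produces the codeword $G(D)$, whose two polynomial entries are $8+5D+D^2+D^3+5D^4+8D^5$ and $8+6D+D^2+D^3+6D^4+8D^5$; all twelve of their coefficients are nonzero, so $wt(G(D))=12$ and hence $d_{free}(\mathcal{C})\le 12$. The whole weight of the argument therefore lies in showing $wt(v(D))\ge 12$ for every nonzero codeword.

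Let $v(D)=u(D)G(D)$ with $u(D)\ne 0$. Since multiplying $u(D)$ by a power of $D$ merely shifts $v(D)$ and leaves its weight unchanged, I may assume $u(D)=\sum_{i=0}^m u_iD^i$ with $u_0\ne 0$ and $u_m\ne 0$. Writing $v(D)=\sum_{t=0}^{m+5}v_tD^t$, each coefficient vector is the convolution $v_t=\sum_i u_iG_{t-i}$, where by the palindromic form of $G(D)$ one has $G_3=G_2$, $G_4=G_1$ and $G_5=G_0=8G_2$; moreover $wt(v(D))=\sum_t wt(v_t)$, so it suffices to show that these blocks carry total weight at least $12$. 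The case $m=0$ is already settled, since then $v(D)=u_0G(D)$ has weight $12$; so assume $m\ge 1$.

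First I would secure the weight of the four outer blocks. As $u_0,u_m\ne 0$ and $G_0$ has both entries nonzero, the blocks $v_0=u_0G_0$ and $v_{m+5}=u_mG_0$ each have weight $2$. Applying the first part of Lemma~\ref{lema} to $v_1=u_1G_0+u_0G_1$ and to $v_{m+4}=u_mG_1+u_{m-1}G_0$ shows that each of these has weight at least $1$. These four indices $0,1,m+4,m+5$ are distinct for every $m\ge 1$, so the outer blocks already contribute at least $6$, and the remaining weight must come from the interior blocks $v_2,\dots,v_{m+3}$. Here the second part of Lemma~\ref{lema} is the main tool: whenever the middle input in the relevant window is nonzero, the block is forced to be nonzero because its two entries then differ. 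The proof then splits on $m=\deg u$: the finitely many small cases (up to a fixed bound) are dispatched by direct inspection of the few interior blocks, repeatedly invoking Lemma~\ref{lema} and tracking the degenerate subcases in which an interior $u_i$ vanishes, each of which forces an explicit relation such as $u_{i+1}=c\,u_i$ that propagates and is quickly exhausted.

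The main obstacle is the interior analysis for large $m$. The difficulty is structural rather than computational: the generator is palindromic of odd degree, so $g_1(-1)=g_2(-1)=0$, both entries share the factor $D+1$, and $G(D)$ is not left prime; one therefore cannot exclude low-weight codewords from degree considerations alone. The crux is to bound how long a run of interior blocks can vanish. Setting an interior block $v_t$ to zero and using the linear independence of $G_1$ and $G_2$ yields two linear relations among the five consecutive coefficients $u_{t-5},\dots,u_t$, so a run of consecutive zero blocks forces $u(D)$ to obey a recurrence whose only solution compatible with $u_0,u_m\ne 0$ is trivial; hence zero-runs have bounded length. Charging each maximal zero-run to the guaranteed-nonzero blocks at its endpoints, one concludes that the interior supplies the missing weight and $wt(v(D))\ge 12$. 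I expect this counting, namely controlling the interior cancellations created by the common factor $D+1$, to be the only genuinely delicate step; everything else reduces to routine linear algebra over $\mathbb{F}_{11}$.
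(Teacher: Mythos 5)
Your setup is consistent with the paper: the upper bound via $u(D)=1$ (both entries of $G(D)$ have all six coefficients nonzero), the normalization $u_0\neq 0$, $u_m\neq 0$, and the guaranteed weight $2+1+1+2=6$ from the outer blocks $v_0,v_1,v_{m+4},v_{m+5}$ via Lemma~\ref{lema}. The gap is in the interior analysis, which you correctly identify as the crux but do not carry out, and the strategy you sketch for it would not close as stated. First, your charging scheme needs the interior blocks $v_2,\dots,v_{m+3}$ to contribute weight at least $6$, but a nonzero block is only guaranteed weight $1$ by Lemma~\ref{lema}, so ``zero-runs have bounded length'' is not enough: you would need at least six nonzero interior blocks, i.e.\ an explicit constant $\ell$ such that any $\ell$ consecutive interior blocks contain a nonzero one, together with $m+2\geq 6\ell$; you prove no such bound, so the threshold below which ``finitely many small cases'' must be inspected is unspecified, and for $m$ just under that threshold the decomposition does not deliver $12$. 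Second, the claim that a zero-run forces a recurrence whose only solution is trivial is asserted, not proved: writing $v_t=(u_{t-1}+u_{t-4})G_1+(8u_t+u_{t-2}+u_{t-3}+8u_{t-5})G_2$ (note these are relations among six, not five, consecutive coefficients), a run of zero blocks only forces $u_s=-u_{s-3}$ over a range plus one further linear relation, and ruling out long runs requires a computation you have not done. Finally, the ``outer $6$ plus interior $6$'' split is already the wrong invariant for small degree: when $u_1=-u_0$ and $m=1$ the block $v_3$ vanishes and the interior contributes only $4$, the deficit being covered by $v_1$ and $v_{m+4}$ having weight $2$ rather than the guaranteed $1$.

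For comparison, the paper closes the large-degree case by a different decomposition that you could adopt: it proves that the window $v_0+v_1D+\cdots+v_5D^5$, which depends only on $u_0,\dots,u_5$ with $u_0\neq 0$, always has weight at least $6$, and then invokes the palindromic symmetry $G_{5-i}=G_i$ to conclude that the mirror window $v_n,\dots,v_{n+5}$, depending only on $u_{n-5},\dots,u_n$ with $u_n\neq 0$, also has weight at least $6$; for $\deg u>5$ the two windows are disjoint and the bound $12$ follows, while the degrees $1\leq\deg u\leq 5$ are handled by explicit finite case analyses. Replacing your zero-run/charging idea with this two-window argument is the natural way to complete the proof.
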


\begin{proof}
%It is clear that $\mathcal{C}$ has degree $5$. 
To prove that $\mathcal{C}$ is MDS we have to show that $d_{free}(\mathcal{C}) = %n(\delta+1)=2(5+1)=
12$. Note that $v(D) = u_{0}{G}(D)$ has weight $12$ for every $u_0 \in \mathbb{F}_{11}\setminus\{0\}$. We will show next that $wt(v(D)) \geq 12$ for every $v(D) = \sum_{i \in \mathbb{N}_{0}}v_{i}D^{i} =u(D){G}(D)$ with $u(D)=\sum_{i \in \mathbb{N}_{0}}u_{i}D^{i} \in \mathbb{F}_{11}[D]\setminus\{0\}$ of degree greater or equal than $1$. We will assume without loss of generality that $u_0\neq 0$ and we will consider several cases depending on the degree of $u(D)$.\\

\textbf{Case 1:} If $u(D) = u_0 + u_1D$, with $u_0,u_1 \neq 0$, then $v_0=u_0 G_0$, $v_1=u_0G_1 + u_1G_0 $, $v_2 = u_0G_2 + u_1G_1$, $v_3= (u_0+ u_1)G_2 $, $v_4= u_0G_1+u_1G_2 $, $v_5= u_0G_0+u_1G_1 $ and $ v_6 = u_1G_0$. It is clear that $wt\big(v_i) = 2$ for $i=0,6$ and, by Lemma \ref{lema}, $wt(v_i) \geq 1$, when $i = 1,2,4,5$. It is now necessary to study the weight of $v_3 = \begin{bmatrix}
u_0 + u_1 \ & \ u_0 + u_1 
\end{bmatrix}.$

\textbf{Case 1.1:} If $u_0+u_1 \neq 0$, then $wt(v_3) = 2$. More, $v_1 = \begin{bmatrix}
5u_0+8u_1 \ \ & \ \ 6u_0+8u_1
\end{bmatrix}$ and $v_2=\begin{bmatrix}
u_0+5u_1 \ & \ u_0+6u_1
\end{bmatrix}$.

\textbf{Case 1.1.1:} If $u_0 \neq 5u_1$ and $u_0 \neq 6u_1$ then $wt(v_i)=2$, $i = 1,2$ and therefore  $wt\big(v(D)\big) \geq 12$.

\textbf{Case 1.1.2:} If $u_0 = 5u_1$, then $wt\big(v(D)\big) = 12$ since $v_1= %u_1( 5G_1 + G_0)= 
u_1\begin{bmatrix}0 & 5\end{bmatrix}$, $v_2=%u_1( 5G_2 + G_1 = 
u_1\begin{bmatrix}10 & 0\end{bmatrix}$, $v_3= %u_1(6G_2)=
u_1\begin{bmatrix}6 & 6\end{bmatrix}$, $v_4 = %u_1( 5G_1+G_2) =  
u_1\begin{bmatrix}4 & 9\end{bmatrix}$ and $v_5 =% u_1(5G_0+G_1) = 
u_1\begin{bmatrix}1 & 2\end{bmatrix}$. If $u_0 = 6u_1$, we also have $wt\big(v(D)\big) = 12$ since %$v_0 = %u_1 6G_0 = \begin{bmatrix}4u_1 & 4u_1\end{bmatrix}$, 
$v_1 = %u_1(6G_1+G_0) =
u_1\begin{bmatrix}5 & 0\end{bmatrix}$, $v_2=%u_1(6G_2+G_1) = 
u_1\begin{bmatrix}0 & 1 \end{bmatrix}$, $v_3= %u_1(7G_2) =
u_1\begin{bmatrix}7 & 7\end{bmatrix}$, $v_4=%u_1(6G_1 + G_2)= 
u_1\begin{bmatrix}9 & 4\end{bmatrix}$ and $v_5 = % u_1(6G_0+G_1) = 
u_1\begin{bmatrix}9 & 10\end{bmatrix}$. %$v_6=%u_1G_0 = \begin{bmatrix}8u_1 & 8u_1\end{bmatrix}$.

\textbf{Case 1.2: } If $u_0+u_1 = 0$ (i.e. $u_1 = 10u_0$) then $v_1 = %u_0(G_1 - G_0) = 
u_0\begin{bmatrix}
8 & 9
\end{bmatrix}$, $ v_2 = %u_0(G_2 - G_1) = 
u_0\begin{bmatrix}
7 & 6
\end{bmatrix}$, $v_3  = u_0\begin{bmatrix}
0 & 0
\end{bmatrix}$, $ v_4=%u_0(G_1 - G_2) = 
u_0\begin{bmatrix}
4 & 5
\end{bmatrix}$ and $v_5 = %u_0(G_0-G_1)= 
u_0\begin{bmatrix}
3 & 2
\end{bmatrix}$. % $v_6=\begin{bmatrix} 3u_0 & 3u_0 \end{bmatrix}$. 
Therefore $wt\big(v(D)\big) = 12$.\\

\textbf{Case 2:} If $u(D)=u_0 + u_1D + u_2D^2$, with  $u_0,u_2 \neq 0$ then: $v_0=u_0 G_0$, $v_1=u_0G_1 + u_1G_0$, $v_2=u_0G_2 + u_1G_1 + u_2G_0$, $ v_3=(u_0 + u_1)G_2 + u_2G_1$, $v_4=u_0G_1 + (u_1+u_2)G_2$, $v_5 = u_0G_0 + u_1G_1 + u_2G_2$, $v_6 = u_1G_0+u_2G_1$ and $v_7 = u_2G_0$. Note that $wt(v_i)=2$, for $i =0,7$.

\textbf{Case 2.1:} If $u_1 = 0$ and since $G_0=8G_2$ then $ v_1 = u_0G_1$, $v_2= (u_0 + 8u_2)G_2$, $v_3 = u_0G_2 + u_2G_1$, $v_4= u_0G_1 + u_2G_2$, $v_5 = (8u_0 + u_2)G_2$, and $v_6 = u_2G_1$. It is clear that $wt(v_i)=2$ for $i=1,6$ and, by Lemma \ref{lema}, $wt(v_i) \geq 1$ when $i=3,4$.

If $u_0+8u_2 \neq 0$ (i.e. $u_0 \neq 3 u_2)$ %and $8u_0+u_2 \neq 0$ (i.e. $u_0 \neq 4 u_2)$ 
then $wt (v_2) = 2$ %for $i=2,5$ 
and then $wt\big(v(D)\big) \geq 12.$ If $u_0 = 3u_2$ %or $u_0 = 4u_2$ 
it is easy to see that $wt\big(v(D)\big) \geq 12$.

\textbf{Case 2.2:} If $u_1\neq 0$, by Lemma \ref{lema}, we have that $wt(v_i) \geq 1$ for $i=1,2,3,4,5,6$. Note that $v_1= \begin{bmatrix}5u_0+8u_1 \ & \ 6u_0+8u_1\end{bmatrix} $ and $v_6= \begin{bmatrix}8u_1+5u_2 \ & \ 8u_1+6u_2\end{bmatrix}.$ Thus if $5u_0+8u_1 \neq 0$ (i.e, $u_1 \neq 9u_0$), $6u_0+8u_1 \neq 0$ (i.e, $u_1 \neq 2u_0$), $8u_1+5u_2 \neq 0$ (i.e, $u_2 \neq 5u_1$) and $8u_1+6u_2 \neq 0$ (i.e, $u_2 \neq 6u_1$) then $wt(v_i)=2$ for $i=2,6$ and therefore $wt\big(v(D)\big)\geq 12.$ 

If $u_1=9u_0$ then $v_2= \begin{bmatrix}2u_0+8u_2 & 8u_2\end{bmatrix}$ and $v_6=\begin{bmatrix}6u_0+5u_2 & 6u_0+6u_2\end{bmatrix}$. Therefore, if $2u_0+8u_2 \neq 0$ (i.e, $u_0\neq 7u_2)$, $6u_0+5u_2 \neq 0$ (i.e, $u_0 \neq u_2$) and $6u_0+6u_2\neq 0$ (i.e. $u_0 \neq 10 u_2)$ then $wt(v_2)=wt(v_6)=2$ and consequently $wt\big(v(D)\big) \geq 12.$

If $u_0=7u_2$ then %$v_0 = u_2 \begin{bmatrix}1 & 1\end{bmatrix}$, $v_1=u_2 \begin{bmatrix}0 & 7\end{bmatrix}$, $v_2 =u_2 \begin{bmatrix}0 &8\end{bmatrix}$, 
$v_3 =u_2\begin{bmatrix}9 &10\end{bmatrix}$, %$v_4 =u_2\begin{bmatrix}0 &7\end{bmatrix}$, 
$v_5 = u_2\begin{bmatrix}9 &6\end{bmatrix}$ and $v_6 = u_2\begin{bmatrix}3 &4\end{bmatrix}$. Therefore $wt\big(v(D)\big) \geq 12.$

%which has weight 13.

If $u_0=u_2$ then %$v_0 = u_2 \begin{bmatrix}8 & 8\end{bmatrix}$, $v_1 =u_2 \begin{bmatrix}0 & 1\end{bmatrix}$, 
$v_2 =u_2 \begin{bmatrix}10 &8\end{bmatrix}$, $v_3 = u_2\begin{bmatrix}4 &5\end{bmatrix}$, $v_4=u_2\begin{bmatrix}4 &5\end{bmatrix}$ and $v_5=u_2\begin{bmatrix}10 &8\end{bmatrix}$. Therefore $wt\big(v(D)\big) \geq 12.$ 

%which has weight 14.

Finally, if $u_0=10u_2$ then %$v_0 = u_2 \begin{bmatrix}3 & 3\end{bmatrix}$, $v_1 = u_2 \begin{bmatrix}0 & 10\end{bmatrix}$, 
$v_2 =u_2 \begin{bmatrix}6 &8\end{bmatrix}$, $v_3 =u_2\begin{bmatrix}6 &7\end{bmatrix}$, $v_4 =u_2\begin{bmatrix}9 &8\end{bmatrix}$ and $v_5 =\begin{bmatrix}3 &5\end{bmatrix}$. Therefore $wt\big(v(D)\big) \geq 12.$ 
In the same way $wt\big(v(D)\big) \geq 12$ for $u_1=2u_0$, $u_2=5u_1$ and $u_2 = 6u_1$.\\

\textbf{Case 3:} If $u(D)=u_0 + u_1D + u_2D^2+u_3D^3$, with $u_0,u_3 \neq 0$, then $v_0 = u_0 G_0$, $v_1=  u_0G_1 + u_1G_0 $, $v_2 = u_0G_2 + u_1G_1 + u_2G_0$, $v_3 = (u_0+u_1)G_2 + u_2G_1+u_3G_0$, $v_4 = (u_0+u_3)G_1+(u_1+u_2)G_2$, $v_5 = u_0G_0 +u_1G_1+(u_2+u_3)G_2$, $v_6 = u_1G_0+u_2G_1+u_3G_2 $, $v_7 = u_2G_0 + u_3G_1$ and $v_8 = u_3G_0$. Clearly $wt(v_i)=2$ for $i=0,8$.

\textbf{Case 3.1:} If $u_1 = 0$ and $u_2 = 0$ then:  $v_1=  u_0G_1 $, $v_2 = u_0G_2 $, $v_3 = u_0G_2 +u_3G_0$, $v_4 = (u_0+u_3)G_1$, $v_5 = u_0G_0 +u_3G_2$, $v_6 = u_3G_2 $, and $v_7 = u_3G_1$. Since $wt(v_i)=2$, for $i=1,2,6,7$ it follows that $wt\big(v(D)\big) \geq 12$.

\textbf{Case 3.2:} If $u_1 = 0$ and $u_2 \neq 0$ then $v_1=  u_0G_1 $, $v_2 = u_0G_2 + u_2G_0 $, $v_3 = u_0G_2 + u_2G_1+u_3G_0$, $v_4 = (u_0+u_3)G_1+u_2G_2$, $v_5 =  u_0G_0 +(u_2+u_3)G_2 $, $v_6 = u_2G_1+u_3G_2$ and $v_7 = u_2G_0 + u_3G_1$. Clearly $wt(v_1)=2$ and $wt(v_i)\geq 1$, for $i = 3,4,6,7$. Since $G_0 =8G_2$ we have that $v_2 = (u_0+8u_2)G_2$. Thus, if $u_0+8u_2 \neq 0$, $wt(v_2)=2$ and therefore $wt\big(v(D)\big) \geq 12.$

If $u_0+8u_2 = 0$, i.e., $u_0 =3u_2$, we have that $v_3 = (3u_2+8u_3)G_2+u_{2}G_{1}$, $v_4 = (3u_2+u_3)G_1+u_2G_2$ and $v_5 =(3u_{2}+u_{3})G_2$. If $3u_2+u_3 \neq 0$ (i.e., $u_3 \neq 8u_2$) then $wt(v_5) = 2$ and it follows that $wt\big(v(D)\big) \geq 12$. If $u_3 = 8u_2$ then $v_3 = u_2\begin{bmatrix} 6 & 7\end{bmatrix}$,
$v_6=u_2 \begin{bmatrix} 2 & 3\end{bmatrix}$ and $v_7=u_2\begin{bmatrix} 4 & 1\end{bmatrix}$. Therefore $wt\big(v(D)\big) \geq 12.$

\textbf{Case 3.3:} Using the same reasoning as in Case 3.2, it follows that $wt\big(v(D)\big) \geq 12$ if $u_1 \neq 0$ and $u_2 = 0$.

\textbf{Case 3.4:} If $u_1 \neq 0$ and $u_2 \neq 0$ then by Lemma \ref{lema}, $wt(v_i) \geq 1$, for $i=1,2,3,5,6,7$.

\textbf{Case 3.4.1:} Since $v_1 = \begin{bmatrix}
5u_0+8u_1 \ & \ 6u_0+8u_1
\end{bmatrix}$ and $v_4 = (u_0+u_3)G_1+(u_1+u_2)G_2$, if $5u_0+8u_1 \neq 0$, $6u_0+8u_1 \neq 0$ and $u_0+u_3 \neq 0$ then $wt\big(v(D)\big) \geq 12$.

\textbf{Case 3.4.2:} If $5u_0+8u_1 = 0$, i.e., $u_0=5u_1$, then $v_2 =\begin{bmatrix}
10u_1+8u_2 \ & \ 8u_2
\end{bmatrix} $ and $v_4 = (5u_1+u_3)G_1+(u_1+u_2)G_2$.

\textbf{Case 3.4.2.1:} Therefore if $10u_1+8u_2 \neq 0$ and $5u_1+u_3 \neq 0$ then $wt(v_2)=2$ and $wt(v_4) \geq 1$ and consequently $wt\big(v(D)\big) \geq 12$. 

\textbf{Case 3.4.2.2:} If $10u_1+8u_2 = 0$ (i.e., $u_1 = 8 u_2$) then $v_4 = \begin{bmatrix} 5u_3 \ & \ 7u_2+6u_3 \end{bmatrix}$ has weight $2$ if $7u_2+6u_3 \neq 0$ and therefore $wt\big(v(D)\big) \geq 12$. If $7u_2+6u_3 = 0$ (i.e., $u_2 = 7u_3$) then $v_5 = u_3 \begin{bmatrix} 9  & 10\end{bmatrix}$ and therefore $wt\big(v(D)\big) \geq 12$. 

\textbf{Case 3.4.2.3:} If $5u_1+u_3 = 0$ (i.e., $u_3 =6u_1$) then $v_4 = (u_1+u_2)G_{2}$ and therefore, if $u_1+u_2 \neq 0$, $wt(v_4)=2$ and then $wt\big(v(D)\big) \geq 12$. If $u_1+u_2 = 0$, then $v_2 = u_2 \begin{bmatrix}
9 & 8
\end{bmatrix}$ and $v_6 = u_2 \begin{bmatrix}
2 & 3
\end{bmatrix}$. Therefore $wt\big(v(D)\big) \geq 12$.

\textbf{Case 3.4.3:} In the same way, we prove that $wt\big(v(D)\big) \geq 12$ if $6u_0+8u_1 = 0$ or if $u_0+u_3 = 0$.\\

\textbf{Case 4:} If $u(D)=u_0 + u_1D + u_2D^2+u_3D^3+u_4D^4$, with  $u_0,u_4 \neq 0$ then $v_0 = u_0 G_0 $, $v_1 = u_0G_1 + u_1G_0$, $v_2 = u_0G_2 + u_1G_1 + u_2G_0 $, $v_3 = (u_0 + u_1)G_2 + u_2G_1+u_3G_0$, $v_4 = (u_0 + u_3)G_1+(u_1+u_2)G_2+ u_4G_0$, $v_5 = u_0G_0 +(u_1+u_4)G_1+(u_2+u_3)G_2$, $v_6 = u_0G_0+u_2G_1+(u_3+u_4)G_2$, $v_7 =  u_2G_0 + u_3G_1 +u_4G_2$, $v_8 = u_3G_0 + u_4G_1 $ and $v_9 =u_4G_0$. Clearly $wt(v_i)=2$ for $i=0,9$.

\textbf{Case 4.1:} If $u_1 = 0$, $u_2=0$ and $u_3 = 0$ then $v_1 = u_0G_1$, $v_2 = u_0G_2$, $v_3 = u_0G_2$, $v_7 = u_4G_2$ and $v_8 = u_4G_1$. Since $wt(v_i)=2$, for $i=1,2,3,7,8$ then $wt\big(v(D)\big) \geq 12$.

\textbf{Case 4.2:} If $u_1 = 0$, $u_2=0$ and $u_3 \neq 0$ then $v_1 = u_0G_1$, $v_2 = u_0G_2$, $v_3 = u_0G_2 +u_3G_0$, $v_4 = (u_0 + u_3)G_1+ u_4G_0$, $v_5 =  u_0G_0 +u_4G_1+u_3G_2$, $v_6 = u_0G_0+(u_3+u_4)G_2$, $v_7 = u_3G_1 +u_4G_2$ and $v_8=u_3G_0 + u_4G_1$. We have that $wt(v_i)=2$, for $i=1,2$ and by Lemma \ref{lema}, $wt(v_i) \geq 1$ when $i = 5,7,8$. Since $G_0=8G_2$ then $v_3 = (8u_3+u_0)G_2$ has weight $2$ if $u_0 \neq 3u_3$ and consecutively $w\big(v(D)\big) \geq 12.$ If $u_0=3u_3$ then $v_4 = 4u_3G_1+u_4G_0$ which has weight greater or equal than 1, by Lemma $\ref{lema}$ and therefore $wt\big(v(D)\big) \geq 12$.

\textbf{Case 4.3:} Using a similar reasoning as in Case 4.2, if $u_1 = 0$, $u_2\neq0$ and $u_3 = 0$, we have $wt\big(v(D)\big) \geq 12$.

\textbf{Case 4.4:} In the same way as Case 4.2, $u_1 \neq 0$, $u_2=0$ and $u_3 = 0$, we have $wt\big(v(D)\big) \geq 12$.

\textbf{Case 4.5:} If $u_1 = 0$, $u_2\neq0$ and $u_3 \neq 0$ then $v_1 = u_0G_1$, $v_2 = u_0G_2 + u_2G_0$, $v_3 = u_0G_2 + u_2G_1+u_3G_0$, $v_4 = (u_0 + u_3)G_1+u_2G_2+ u_4G_0$, $v_5 = u_0G_0 +u_4G_1+(u_2+u_3)G_2$, $v_6 = u_0G_0+u_2G_1+(u_3+u_4)G_2$, $v_7 = u_2G_0 + u_3G_1 +u_4G_2$ and $v_8 = u_3G_0 + u_4G_1 $.
Note that $wt(v_1)=2$ and by Lemma \ref{lema}, $wt(v_i) \geq 1$ for $i=3,5,6,7,8$. Since $v_2 = (u_0+8u_2)G_2$ it follows that $wt\big(v(D)\big) \geq 12$ if $u_0+8u_2 \neq 0$. If $u_0+8u_2 = 0$ (i.e. $u_0=3u_2$) then $v_4=(3u_2+u_3)G_1+u_2G_2+u_4G_0$. By Lemma \ref{lema}, if $3u_2+u_3 \neq 0$ (i.e. $u_3 \neq 8u_2$) then $wt(v_4) \geq 1$ and therefore $wt\big(v(D)\big) \geq 12$. If $u_3 = 8u_2$ then $v_3 = u_2(3G_2+G_1)+8u_2G_0 = u_2\begin{bmatrix}
6 & 7
\end{bmatrix}$ and we get $wt\big(v(D)\big) \geq 12$.

\textbf{Case 4.6:} Analogously to the Case 4.5, for $u_1 \neq 0$, $u_2\neq0$ and $u_3 = 0$, we have $wt\big(v(D)\big) \geq 12$.

\textbf{Case 4.7:} If $u_1 \neq 0$, $u_2=0$ and $u_3 \neq 0$ then $v_1 = u_0G_1 + u_1G_0$, $v_2 = u_0G_2 + u_1G_1$, $v_3 = (u_0 + u_1)G_2 +u_3G_0$, $v_4 = (u_0 + u_3)G_1+u_1G_2+ u_4G_0$, $v_5 = u_0G_0 +(u_1+u_4)G_1+u_3G_2$, $v_6 = u_0G_0+(u_3+u_4)G_2$, $v_7 = u_3G_1 +u_4G_2$ and $v_8 = u_3G_0 + u_4G_1$. By Lemma \ref{lema} $wt(v_i) \geq 1$ for $i=1,2,7,8$.

\textbf{Case 4.7.1:} If if $u_0 \neq 5u_1$ and $u_0 \neq 6 u_1$ and since $v_1 = \begin{bmatrix}
5u_0+8u_1 \ & \ 6u_0+8u_1
\end{bmatrix}$ and $v_2 = \begin{bmatrix}
u_0+5u_1 \ & \ u_0+6u_1
\end{bmatrix}$ then $wt(v_i)=2$, for $i=1,2$. Additionally, since $v_7 = \begin{bmatrix}
5u_3+u_4 \ & \ 6u_3+u_4
\end{bmatrix}$ and $v_8 = \begin{bmatrix}
8u_3+5u_4 \ & \ 8u_3+6u_4
\end{bmatrix}$, if $u_4 \neq 6u_3$ and $u_4 \neq 5u_3$, $wt(v_i)=2$, for $i=7,8$ and therefore $wt\big(v(D)\big) \geq 12$.

\textbf{Case 4.7.2:} If $u_0 = 5u_1$, then $v_3 = (6u_1+8u_3)G_2$, $v_4 = (5u_1+u_3)G_1+(u_1+8u_4)G_2$, $v_5 = (7u_1+u_3)G_2+(u_1+u_4)G_1$, $v_6 = (7u_1+u_3+u_4)G_2$, $v_7 = u_3G_1+u_4G_2$ and $v_8 = u_3G_0+u_4G_1$.  

\textbf{Case 4.7.2.1:} If $5u_1+u_3 \neq 0$ (i.e. $u_3 \neq 6u_1$), $6u_1+8u_3 \neq 0$ (i.e. $u_3 \neq 2u_1$) and $u_1+u_4 \neq 0$ (i.e. $u_4 \neq 10u_1$) then $wt(v_3)=2$ and $wt(v_i) \geq 1$, for $i=4,5.$

\textbf{Case 4.7.2.2:} If $u_3 = 6u_1$ then $v_3 = u_1 \begin{bmatrix} 10 & 10 \end{bmatrix}$ and $v_4 = (u_1+8u_4)G_2$.

If $u_1+8u_4 \neq 0$ (i.e. $u_4 \neq 4u_1$) then $wt(v_4) = 2$ and therefore $wt\big(v(D)\big) \geq 12$.

If $u_4 = 4u_1$ then $v_6 = u_1 \begin{bmatrix} 6 & 6 \end{bmatrix}$, $v_5 = 2u_1G_2+5u_1G_1$ and $v_7 = u_1 \begin{bmatrix} 1 & 7 \end{bmatrix}$. Therefore $wt(v_i)=2$ for $i=6,7$, $wt(v_5) \geq 1$ and consequently $wt\big(v(D)\big) \geq 12$.

\textbf{Case 4.7.2.3:} Similar to Case 4.7.2.2. If $u_3 = 2u_1$, then $wt\big(v(D)\big) \geq 12$.

\textbf{Case 4.7.2.4:} Similar to Case 4.7.2.2. If $u_4 = 10u_1$ then $wt\big(v(D)\big) \geq 12$.

\textbf{Case 4.7.3:} Similar to Case 4.7.3. If $u_0 = 6u_1$ then $wt\big(v(D)\big) \geq 12$.

\textbf{Case 4.7.4:} If $u_4 = 6u_3$ then $v_1=u_0G_1+u_1G_0$, $v_2=u_0G_2+u_1G_1$, $v_3 = (u_0+u_1+8u_3)G_2$, $v_4 = (u_0+u_3)G_1+(u_1+4u_3)G_2$, $v_5=(u_1+6u_3)G_1+(8u_0+u_3)G_2$, $v_6 = (8u_0+7u_3)G_2$, $v_7 = %u_3(G_1+6G_2)=
u_3 \begin{bmatrix} 0 & 1 \end{bmatrix}$ and $v_8 = %u_3(G_0+6G_1)=
u_3 \begin{bmatrix} 5 & 0 \end{bmatrix}$.

\textbf{Case 4.7.4.1:} If $u_0+u_3 \neq 0$, $8u_0+7u_3 \neq 0$ and $u_1+6u_3 \neq 0$ then $wt(v_i) \geq 1$, for $i=4,5$ and $wt(v_6)=2$ and therefore $wt\big(v(D)\big) \geq 12$.

\textbf{Case 4.7.4.2:} If $u_0+u_3 = 0$, i.e, $u_0 = 10u_3$ then $v_4=(u_1+4u_3)G_2$ and $v_6 =  u_3\begin{bmatrix} 10 & 10 \end{bmatrix}$. So $wt(v_6)=2$ and $wt(v_4)=2$ if $u_1+4u_3 \neq 0$ and therefore $wt\big(v(D)\big) \geq 12$. If $u_1+4u_3 = 0$, i.e. $u_1 = 7u_3$ then $v_5 = % u_3(2G_1+4G_2)=
u_3\begin{bmatrix} 3 & 5 \end{bmatrix}$ and $v_3 = u_3\begin{bmatrix} 3 & 3 \end{bmatrix}$. So $wt(v_i)=2$, for $i=3,5$ and consequently $wt\big(v(D)\big) \geq 12$.

\textbf{Case 4.7.4.3:} Similarly to Case 4.7.4.2, it is possible to conclude that if $8u_0+7u_3 = 0$, i.e., $u_0 = 6u_3$ then $wt\big(v(D)\big) \geq 12$.

\textbf{Case 4.7.4.4:} If $u_1 +6u_3 = 0$, i.e., $u_1=5u_3$ then $v_4 = (u_0+u_3)G_1+9u_3G_2$, $v_5 = (8u_0+u_3)G_2$ and $v_3 = (u_0+2u_3)G_2$. Then $wt(v_i)=2$, for $i=3,5$ if $8u_0+u_3 \neq 0$ and $u_0+2u_3 \neq 0$. If $8u_0+u_3 = 0$ (i.e.  $u_3 = 3u_0$) then $v_4 = u_0 \begin{bmatrix} 5 & 9 \end{bmatrix}$ and $v_3 = u_0\begin{bmatrix} 7 & 7 \end{bmatrix}$. So $wt(v_i)=2$, for $i=3,4$ and therefore $wt\big(v(D)\big) \geq 12$. If $u_0+2u_3=0$ (i.e. $u_3 = 5u_0$) then $v_4 = u_0\begin{bmatrix} 9 & 4 \end{bmatrix}$ and $v_5 = u_0\begin{bmatrix} 2 & 2 \end{bmatrix}$. So $wt(v_i)=2$, for $i=4,5$ and consequently $wt\big(v(D)\big) \geq 12$.

\textbf{Case 4.7.5:} Similar to Case 4.7.4. If $u_4= 5u_3$ then $wt\big(v(D)\big) \geq 12$.

\textbf{Case 4.8:} If $u_1 \neq 0$, $u_2\neq0$ and $u_3 \neq 0$ then, since $G_0 =8G_2$, $v_1=u_0G_1+u_1G_0$, $v_2 = (u_0+8u_2)G_2 + u_1G_1$, $v_3 =(u_0 + u_1+8u_3)G_2 + u_2G_1$, $v_4 = (u_0 + u_3)G_1+(u_1+u_2+8u_4)G_2$, $v_5 = (8u_0+u_2+u_3)G_2+(u_1+u_4)G_1$, $v_6 = u_2G_1+(u_3+u_4+8u_0)G_2$, $v_7 = u_3G_1 +(8u_2+u_4)G_2$ and $v_8 = u_3G_0 + u_4G_1$. By Lemma \ref{lema}, $wt(v_i) \geq 1$ for $i=1,2,3,6,7,8$. Note that $v_1 = \begin{bmatrix} 5u_0+8u_1 \ & \ 6u_0 + 8u_1
\end{bmatrix}$ and $v_8 = \begin{bmatrix} 8u_3+5u_4 \ & \ 8u_3 + 6u_4
\end{bmatrix}$.

\textbf{Case 4.8.1:} If $ 5u_0+8u_1 \neq 0$, $6u_0 + 8u_1 \neq 0$, $8u_3+5u_4 \neq 0$ and $8u_3 + 6u_4 \neq 0$ then $wt(v_i)=2$ for $i=2,8$ and consequently $wt\big(v(D)\big) \geq 12$.

\textbf{Case 4.8.2:} If $ 5u_0+8u_1 = 0$ (i.e. $u_0=5u_1$) then $v_1 = \begin{bmatrix} 0 & 5
\end{bmatrix}$, $v_2 = \begin{bmatrix} 10u_1+8u_2 & 8u_2
\end{bmatrix}$ and $v_8 = \begin{bmatrix} 8u_3+5u_4 & 8u_3+6u_4
\end{bmatrix}$

\textbf{Case 4.8.2.1:} If $10u_1+8u_2 \neq 0$, $8u_3+5u_4 \neq 0$ and $8u_3+6u_4 \neq 0$ then $wt(v_i)=2$ for $i =2,8$ and thus $wt\big(v(D)\big) \geq 12$.

\textbf{Case 4.8.2.2:} If $10u_1+8u_2 = 0$ (i.e. $u_2 = 7u_1$) then $v_5 = (8u_1+u_3)G_2+(u_1+u_4)G_1$ and $v_4 = (5u_1+u_3)G_1+(8u_1+8u_4)G_2$.

\textbf{Case 4.8.2.2.1:} If $u_1+u_4 \neq 0$ (i.e. $u_4 \neq 10u_1$)  and $5u_1+u_3 \neq 0$ (i.e. $u_3 \neq 6u_1$) then $wt(v_i) \geq 1$ for $i=4,5$ and consequently $wt\big(v(D)\big) \geq 12$.

\textbf{Case 4.8.2.2.2:} If $u_4 = 10u_1$ then $v_4 = (5u_1+u_3)G_1 $ and $v_5 =(8u_1+u_3)G_2$.

\textbf{Case 4.8.2.2.2.1:} If $5u_1+u_3 \neq 0$ (i.e. $u_3 \neq 6 u_1$) then $wt(v_4)=2$ and consequently $wt\big(v(D)\big) \geq 12$.

If $u_3 = 6 u_1$ then $v_5 = u_1 \begin{bmatrix} 3 & 3
\end{bmatrix}$. Therefore $wt(v_5)=2$ and $wt\big(v(D)\big) \geq 12$.

\textbf{Case 4.8.2.2.3:} Similar to Case 4.8.2.2.2. If $5u_1+u_3 = 0$ then $wt\big(v(D)\big) \geq 12$.

\textbf{Case 4.8.2.3:} Similar to Case 4.8.2.2. If $8u_3+5u_4 = 0$ then $wt\big(v(D)\big) \geq 12$.

\textbf{Case 4.8.2.4:} Similar to Case 4.8.2.2. If $8u_3+6u_4 = 0$ then $wt\big(v(D)\big) \geq 12$.

\textbf{Case 4.8.3:} Similar to Case 4.8.2.1, if $6u_0+8u_1 = 0$ or $8u_3+5u_4 =0$ or $8u_3+6u_4 = 0$ then $wt\big(v(D)\big) \geq 12$.\\

\textbf{Case 5:} Using the same reasoning as before it can be proved that $wt\big(v(D)\big) \geq 12$ if $u(D)=u_0 + u_1D + u_2D^2+u_3D^3+u_4D^4 +u_5D^5$, with $u_0,u_5 \neq 0$.\\

Finally, let us consider the last case.

\textbf{Case 6:} If $u(D)=u_0 + u_1D + \cdots +  u_nD^n$, with $u_0,u_n \neq 0$ and $n > 5$, then $v_0 = u_0 G_0$, $v_1 = u_0G_1 + u_1G_0$, $v_2 = u_0G_2 + u_1G_1 + u_2G_0$, $v_3 = u_0G_2 + u_1G_2 + u_2G_1+u_3G_0$, $v_4 = u_0G_1+u_1G_2+u_2G_2+u_3G_1 + u_4G_0$, $v_5 = u_0G_0+u_1G_1+u_2G_2+u_3G_2+u_4G_1+u_5G_0$, $\cdots$ ,$v_{n+1}=u_{n}G_1 + u_{n-1}G_2 + u_{n-2}G_2 + u_{n-3}G_1 + u_{n-4}G_{0}$, $v_{n+2}=u_{n}G_2 + u_{n-1}G_2 + u_{n-2}G_1 + u_{n-3}G_0$, $v_{n+3}=u_{n}G_2 + u_{n-1}G_1 + u_{n-2} G_0$, $v_{n+4}=u_nG_1 + u_{n-1}G_0$ and $v_{n+5}= u_nG_0$. As before, $wt(v_0)=wt(v_{n+5})=2$.

%Let us first show that $wt(v_0+v_1D+v_2D^2+v_3D^3+v_4D^4+v_5D^5) \geq 6$ for all $u_0, u_1,u_2,u_3,u_4,u_5  \in \mathbb{F}_{11}$ with $u_0 \neq 0.$

Let us first show that the weight of $v(D)\vert_{[0,5]}:=v_0+v_1D+v_2D^2+v_3D^3+v_4D^4+v_5D^5$ has weight greater or equal  than $6$ for all $u_0, u_1,u_2,u_3,u_4,u_5  \in \mathbb{F}_{11}$ with $u_0 \neq 0.$

\textbf{Case 6.1:} Let us consider first the case $u_1=0$. Then $v_1=u_0G_1$ has weight 2, $v_2=(u_0+8u_2)G_2$, $v_3=(u_0 +8u_3)G_2+u_2G_1$ and $v_4=(u_2+8u_4)G_2+(u_0+u_3)G_1$. In this case, if $u_2 \neq 0$ and $u_0+u_3 \neq 0$, then, by Lemma \ref{lema}, $wt(v_3) \geq 1$ and  $wt(v_4) \geq 1$ and $wt\big(v(D)\vert_{[0,5]}\big) \geq 6$. 

If $u_2 = 0$, then $v_2=u_0G_2$ and therefore $wt(v_2)=2$, which implies that $wt\big(v(D)\vert_{[0,5]}\big) \geq 6$. 

If $u_0 + u_3 = 0$, i.e., $u_0=10u_3$, then $v_2 =( 10u_3+8u_2)G_2$ and $v_3=7u_3G_2+u_2G_1$. In this case if $10u_3+8u_2 \neq 0$ then $wt(v_2) = 2$ and therefore $wt\big(v(D)\vert_{[0,5]}\big) \geq 6$. If $10u_3 + 8u_2 = 0$, i.e., $u_3 = 8u_2$, then $v_3=%u_2(G_2+G_1) = 
u_2 \begin{bmatrix} 6 & 7 \end{bmatrix}$ and therefore $wt(v_3) = 2 $ which implies that $wt\big(v(D)\vert_{[0,5]}\big) \geq 6$.

\textbf{Case 6.2:} Let us consider now that $u_1 \neq 0$. Then $v_1= \begin{bmatrix}
5u_0 + 8u_1 \ & \ 6u_0+8u_1
\end{bmatrix}$.

\textbf{Case 6.2.1:} If $u_0 \neq 6u_1$ and $u_0 \neq 5u_1$ then $wt(v_1)=2$ and $wt(v_2)\geq 1$, by Lemma \ref{lema}.

If $u_2 \neq 0$ then also $wt(v_3) \geq 1$ and therefore $wt\big(v(D)\vert_{[0,5]}\big) \geq 6$. 

If $u_2 = 0$ then $v_3=(u_0+u_1+8u_3)G_2 $ and $v_4=(u_1+8u_4)G_2+(u_0+u_3)G_1$. Since $v_4 = \begin{bmatrix} 0 & 0 \end{bmatrix}$ then $
u_1= 3u_4 \text{ and }
u_0=10u_3$. If $u_1=3u_4$ and $u_0=10u_3$ it follows that $v_3=%(10u_3+3u_4+8u_3)G_2 = 
(7u_3+3u_4)G_2$ and $v_2 = 10u_3G_2+3u_4G_1$.

If $7u_3+3u_4=0$, i.e. $u_3=9u_4$ then %$v_3=\begin{bmatrix} 0 & 0 \end{bmatrix}$ and 
$v_2=%2u_4G_2+3u_4G_1 = u_4(2G_2+3G_1)-u_4(\begin{bmatrix} 2 & 2 \end{bmatrix}+ \begin{bmatrix} 4 & 7 \end{bmatrix})=
u_4\begin{bmatrix} 6 & 9 \end{bmatrix}$ which has weight $2$.

If $7u_3+3u_4 \neq 0$ then $wt(v_3)=2$. Thus $wt\big(v(D)\vert_{[0,5]}\big) \geq 6$. 

\textbf{Case 6.2.2:} Let us consider now that $u_0=6u_1$. Then $v_1=%u_1(6G_1+G_0)=
u_1\begin{bmatrix} 5 & 0 \end{bmatrix}$, $v_2 = 6u_1G_2+u_1G_1+u_2G_0$, $v_3 = 7u_1G_2+u_2G_1+u_3G_0$ and $v_4=(6u_1+u_3)G_1+(u_1+u_2)G_2+u_4G_0$. Then we have that $wt(v_1)=1$ and, by Lemma \ref{lema}, $wt(v_2) \geq 1$.

\textbf{Case 6.2.2.1:} If $u_2 \neq 0$ and $6u_1+u_3 \neq 0$ (i.e. $u_3\neq 5u_1$) then $wt(v_3) \geq 1$ and $wt(v_4)\geq 1$ and therefore $wt\big(v(D)\vert_{[0,5]}\big) \geq 6$.

\textbf{Case 6.2.2.2:} If $u_2 =0$ then  $v_2=%6u_1G_2+u_1G_1=u_1(6G_2+G_1)=
u_1\begin{bmatrix} 0 & 1 \end{bmatrix}$, $v_3=%7u_1G_2+u_3G_0=
(7u_1+8u_3)G_2$, $v_4=(6u_1+u_3)G_1+u_1G_2+u_4G_0$.

\textbf{Case 6.2.2.1.1:} If $7u_1+8u_3 \neq 0 $ (i.e. $u_1\neq 2u_3$) then $wt(v_3)=2$ and $wt\big(v(D)\vert_{[0,5]}\big) \geq 6$. 

\textbf{Case 6.2.2.1.2:} If $u_1=2u_3$ then $v_4=%2u_3G_1+2u_3G_2+u_4G_0=u_3(2G_1+G_2)+u_4G_0=u_3\begin{bmatrix} 1 & 3 \end{bmatrix}+u_4G_0= 
\begin{bmatrix} u_3+8u_4 & 3u_3+8u_4 \end{bmatrix}$. Thus, if $u_3+8u_4 \neq 0$ and $3u_3+8u_4 \neq 0$ (i.e. $u_3 \neq 3u_4$ and $u_3 \neq u_4$) then $wt(v_4)=2$ and $wt\big(v(D)\vert_{[0,5]}\big) \geq 6$. 

If $u_3=3u_4$ then $wt(v_4)=1$ and $v_5=(3u_4+u_5)G_0+7u_4G_1+3u_4G_2$ is such that $wt(v_5) \geq 1$ and $wt\big(v(D)\vert_{[0,5]}\big) \geq 6$. 

If $u_3=u_4$ then $vt(v_4)=1$ and $v_5=(u_4+u_5)G_0+3u_4G_1+u_4G_2$ has also weight greater or equal than $1$. So we conclude that $wt\big(v(D)\vert_{[0,5]}\big) \geq 6$. 

\textbf{Case 6.2.2.3:} On the other hand, if $u_3=5u_1$ then $v_1=u_1\begin{bmatrix} 5 & 0 \end{bmatrix}$, $v_2=6u_1G_2+u_1G_1+u_2G_0$, $v_3=%7u_1G_2+u_2G_1+5u_1G_0 = u_1(7G_2+5G_0)+u_2G_1 =3u_1G_2+u_2G_1 = 
\begin{bmatrix} 3u_1+5u_2 \ & \ 3u_1+6u_2 \end{bmatrix}$. Thus $wt(v_1)=1$ and $wt(v_2) \geq 1$ by Lemma \ref{lema}.

\textbf{Case 6.2.2.3.1:} If $3u_1+5u_2 \neq 0$ and $3u_1+6u_2 \neq 0$ (i.e. $u_1 \neq 2u_2$ and $u_1 \neq 9u_2$) then $wt(v_3)=2$ and therefore $wt\big(v(D)\vert_{[0,5]}\big) \geq 6$. 

\textbf{Case 6.2.2.3.2:} If $u_1=2u_2$ then $wt(v_3)=1$ and $v_4=(7u_1+8u_4)G_2$.

\textbf{Case 6.2.2.3.2.1:} If $7u_1+8u_4 \neq 0$, then $wt(v_4)=2$ and therefore $wt\big(v(D)\vert_{[0,5]}\big) \geq 6$. 

\textbf{Case 6.2.2.3.2.2:} If $7u_1+8u_4=0$, i.e., $u_1=2u_4$ and then $v_2=%6u_1G_2+u_1G_1+6u_1G_0 = u_1(6G_0 + G_1 + 6u_1G_0) = u_1(\begin{bmatrix} 6 & 6 \end{bmatrix} + \begin{bmatrix} 5 & 6 \end{bmatrix} + \begin{bmatrix} 4 & 4 \end{bmatrix}) = 
u_1\begin{bmatrix} 4 & 5 \end{bmatrix}$ which has weight $2$ and then $wt\big(v(D)\vert_{[0,5]}\big) \geq 6$. 

\textbf{Case 6.2.2.3.3:} Similar to Case 6.2.2.3.2. If $u_1 = 9u_2$ then $wt\big(v(D)\vert_{[0,5]}\big) \geq 6$. 

\textbf{Case 6.2.3:} In the same way as the Case 6.2.2, it is possible to conclude that if $u_0=5u_1$ then $wt\big(v(D)\vert_{[0,5]}\big) \geq 6$.

Thus, we have proven that $wt\big(v(D)\vert_{[0,5]}\big) \geq 6$. Note that $v_n = \hat{v_5},v_{n+1}=\hat{v_4}, v_{n+2} = \hat{v_3}, v_{n+3} = \hat{v_2}, v_{n+4} = \hat{v_1} v_{n+5}= \hat{v_0}$ for $$\hat{v}(D) = (u_n + u_{n-1}D + u_{n-2}D^2 + u_{n-3}D^3 + u_{n-4}D^4 + u_{n-5}D^5){G}(D).$$

Since $wt(\hat{v_0}+\hat{v_1}D+\hat{v_2}D^2+\hat{v_3}D^3+\hat{v_4}D^4+\hat{v_5}D^5) \geq 6$ then $$wt(v_n D^{n}+ v_{n+1}D^{n+1} + v_{n+2}D^{n+2} + v_{n+3}D^{n+3} + v_{n+4}D^{n+4} + v_{n+5}D^{n+5}) \geq 6$$ for all $u_{n-5}, u_{n-4}, u_{n-3}, u_{n-2}, u_{n-1}, u_n \in \mathbb{F}_{11}$ with $u_n \neq 0$ thus $wt\big(v(D)\big) \geq 12$.
\end{proof}

%Next, we present two theorems similar to the previous theorem but for $\mathbb{F}_{13}$. The proof of these is done in exactly the same way and we use the help of the computer to validate them. They are also the first construction up to now of an MDS $(2,1,5)$ convolutional code over the field $\mathbb{F}_{13}$.

%\begin{theorem}
%\label{teorema32}
%Let $${G}(D) = G_0 + G_1D+ G_2D^2+ G_2D^3+G_1D^4+G_0D^5,$$ with $ G_0 = \begin{bmatrix}
%8 & 8
%\end{bmatrix}\in \mathbb{F}_{13}^{1 \times 2}$, $ G_1 = \begin{bmatrix}
%7 & 6
%\end{bmatrix} \in \mathbb{F}_{13}^{1 \times 2}$ and $ G_2 = \begin{bmatrix}
%1 & 1
%\end{bmatrix} \in \mathbb{F}_{13}^{1 \times 2}$. The $(2, 1, 5)$ %convolutional code $\mathcal{C} =\text{Im}_{\mathbb{F}_{13}[D]} \ {G}(D)$ is MDS.
%\end{theorem}

%\begin{theorem}
%\label{teorema33}
%Let $${G}(D) = G_0 + G_1D+ G_2D^2+ G_2D^3+G_1D^4+G_0D^5,$$ with $ G_0 = \begin{bmatrix}
%5 & 5
%\end{bmatrix}\in \mathbb{F}_{13}^{1 \times 2}$, $ G_1 = \begin{bmatrix}
%9 & 4
%\end{bmatrix} \in \mathbb{F}_{13}^{1 \times 2}$ and $ G_2 = \begin{bmatrix}
%1 & 1
%\end{bmatrix} \in \mathbb{F}_{13}^{1 \times 2}$. The $(2, 1, 5)$ convolutional code $\mathcal{C} =\text{Im}_{\mathbb{F}_{13}[D]} \ {G}(D)$ is MDS.
%\end{theorem}

Note that the convolutional code defined in Theorem \ref{teorema3} is catastrophic, since $$G(D)=(1+D)[(1+10D+D^2+10D^3+D^4)G_{0}+(D+10D^{2}+D^{3})G_{1}+D^{2}G_{2}],$$ because $1+D$ is not a unimodular factor.

A noncatastrophic convolutional code can be constructed using a similar construction, by slightly changing the last three coefficients of $G(D)$. For example, $$G(D)=G_0+G_1D+G_2D^2+G_2D^3+aG_1D^4+bG_0D^5$$ with $ G_0 = \begin{bmatrix}
8 & 8
\end{bmatrix}\in \mathbb{F}_{11}^{1 \times 2}$, $ G_1 = \begin{bmatrix}
5 & 6
\end{bmatrix} \in \mathbb{F}_{11}^{1 \times 2}$ and $ G_2 = \begin{bmatrix}
1 & 1
\end{bmatrix} \in \mathbb{F}_{11}^{1 \times 2}$ is a generator matrix of a noncatastrophic $(2,1,5)$ convolutional code for every $(a,b) \in \mathbb{F}_{11}^{2}\setminus \{(1,1)\}$. Is still an open problem to find it this code is MDS for some $(a,b) \in \mathbb{F}_{11}^{2}\setminus \{(1,1)\}$.

\begin{remark}
    The coefficients of the generator matrix $G(D)$ defined in the Theorem \ref{teorema3}, $(G_0, G_1, G_2, G_2, G_1, G_0)$ are in a palindrome format and are such that the generator matrix defined by the first three coefficients \begin{equation} \label{+}
\tilde{G}(D) = G_0+G_1D+G_2D^2
\end{equation} is an MDS $(2,1,2)$ convolutional code over $\mathbb{F}_{11}$ defined by Justesen in \cite{h} (see also \cite{cc}). Justesen gave (the first) construction of MDS convolutional codes of rate $1/n$. In particular, for $n=2$, the proposed construction is given by the following theorem.

\begin{theorem}[\cite{h}]
\label{teorema}
For $n =2$ and $|\mathbb{F}_{q}| \geq 3$, set $s_{2}:=\Big\lceil \frac{|\mathbb{F}_{q}|-1}{2} \Big\rceil$ and $\delta := \Big\lfloor \frac{2}{9}| \mathbb{F}_{q}| \Big\rfloor$. Moreover, let $\alpha$ be a primitive element of $\mathbb{F}_{q}$, and set $g_{1}(x) := (x-\alpha)(x-\alpha^{2})$ and $g_{2}(x) := g_{1}(x\alpha^{-s_{2}})$. Then $G(D) = [g_1(D) \ g_{2}(D)]$ is the generator matrix of an MDS $(2, 1, 2)$ convolutional code.
\end{theorem}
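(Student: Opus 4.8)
The target is to show $d_{free}(\mathcal{C}) = n(\delta+1) = 2(2+1) = 6$. The upper bound $d_{free} \le 6$ is immediate: taking $u(D)=1$ gives the codeword $[g_1(D)\ g_2(D)]$, and since $\alpha$ is primitive (so $\alpha+\alpha^2=\alpha(1+\alpha)\neq 0$) each $g_i(D)$ has exactly three nonzero coefficients, whence this codeword has weight $6$; alternatively invoke the generalized Singleton bound quoted earlier. All the content is in the lower bound $wt(ug_1)+wt(ug_2)\ge 6$ for every nonzero $u=u(D)\in\mathbb{F}_q[D]$, where I abbreviate $ug_i := u(D)g_i(D)$.

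The plan is to run a BCH/Vandermonde argument through the cyclic reduction modulo $D^{q-1}-1$. Write $h_i(D)=(D^{q-1}-1)/g_i(D)$, which is a polynomial because $g_i$ splits into distinct linear factors $(D-\gamma)$ with $\gamma\in\mathbb{F}_q^\ast$, and let $\overline{p}$ denote the residue of $p$ modulo $D^{q-1}-1$. I would use two facts. First, reduction does not increase weight, $wt(\overline{p})\le wt(p)$, and it preserves vanishing at every $(q-1)$-th root of unity, so $\overline{ug_i}$ vanishes at the two consecutive powers of $\alpha$ that are the roots of $g_i$. Second, a nonzero length-$(q-1)$ cyclic word vanishing at $m$ consecutive powers of $\alpha$ has weight at least $m+1$. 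Since $\alpha$ is primitive, the roots of $g_1$ are $\{\alpha,\alpha^2\}$ and those of $g_2$ are $\{\alpha^{1+s_2},\alpha^{2+s_2}\}$, and the choice of $s_2$ makes these pairs disjoint, i.e. $\gcd(g_1,g_2)=1$.

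The argument then splits on whether the reductions vanish, using that $\overline{ug_i}=0$ if and only if $h_i\mid u$. If both $\overline{ug_1},\overline{ug_2}\neq 0$, each has two consecutive designed zeros, hence $wt(ug_i)\ge wt(\overline{ug_i})\ge 3$ and the sum is $\ge 6$. Pleasantly, the degenerate case is then the easy one: if exactly one reduction vanishes, say $\overline{ug_1}=0$ (so $h_1\mid u$), then $ug_2$ is a multiple of $h_1$ and therefore vanishes on all of $\mathbb{F}_q^\ast\setminus\{\alpha,\alpha^2\}$, a block of $q-3$ consecutive powers; as $\overline{ug_2}$ is nonzero here, the root bound forces $wt(ug_2)\ge q-2$, which is $\ge 6$ for the relevant fields $q\in\{9,11,13\}$. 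This covers every $u$ with $D^{q-1}-1\nmid u$, since $h_1\mid u$ and $h_2\mid u$ simultaneously would mean $\mathrm{lcm}(h_1,h_2)=D^{q-1}-1$ divides $u$ (the roots of $h_1,h_2$ jointly exhaust $\mathbb{F}_q^\ast$).

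The remaining case $D^{q-1}-1\mid u$ is where the cyclic reduction is uninformative, and this is the main obstacle. Here I would argue by strong induction on $\deg u$: writing $u=(D^{q-1}-1)u_1$, the products $ug_i=(D^{q-1}-1)u_1g_i$ are superpositions of length-$(q-1)$–shifted copies of $u_1g_i$, and the difficulty is that multiplication by $D^{q-1}-1$ can genuinely lower the weight of a single component (it can collapse $ug_1$ to weight $2$). The bound must therefore be recovered from the other component, whose weight stays large precisely because $h_i g_j$ inherits the large weight $\ge q-2$ proved above and the seam cancellations remove only boundedly many terms per period; quantifying this cancellation carefully is the technical heart. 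Finally I would remark that for this degree-$2$ instance only the disjointness of the two pairs of designed zeros is used, so the exact value $s_2=\lceil(q-1)/2\rceil$ and the bound $\delta=\lfloor 2q/9\rfloor$ are merely Justesen's canonical normalizations, the degree restriction becoming essential only for the higher-degree members $g_1=\prod_{k=1}^{\delta}(D-\alpha^k)$ of the general family.
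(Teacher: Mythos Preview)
The paper does not supply a proof of this theorem; it is quoted as a result of Justesen with a citation, so there is no in-paper argument to compare your attempt against. I assess the proposal on its own terms.

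Your cyclic-reduction/BCH approach for the nondegenerate cases is clean and correct. When $\overline{ug_1}$ and $\overline{ug_2}$ are both nonzero, each carries two consecutive designed zeros and hence weight $\ge 3$; when exactly one reduction vanishes, the surviving one inherits $q-3$ consecutive zeros from the factor $h_i$ and thus has weight $\ge q-2\ge 7$ for $q\in\{9,11,13\}$. The disjointness of the two root pairs, guaranteed by $s_2$, is exactly what forces both reductions to vanish only when $D^{q-1}-1\mid u$. You are also right that, as stated, the theorem is only coherent for $q\in\{9,11,13\}$, since $g_1$ is written out with degree $2$ while $\delta=\lfloor 2q/9\rfloor$.

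The genuine gap is your Case~3. You yourself write that ``quantifying this cancellation carefully is the technical heart'' and then do not carry it out. The suggested strong induction via $u=(D^{q-1}-1)u_1$ does not close as stated: the inductive hypothesis yields only $wt(u_1g_1)+wt(u_1g_2)\ge 6$, while multiplication by $D^{q-1}-1$ can collapse one component to weight $2$ (as you concede), and nothing in your sketch forces the other component to have weight $\ge 4$. The remark that ``$h_ig_j$ inherits the large weight $\ge q-2$'' concerns that single product, not $(D^{q-1}-1)u_1g_j$ for an arbitrary $u_1$, possibly itself divisible by high powers of $D^{q-1}-1$; and ``seam cancellations remove only boundedly many terms per period'' is an assertion, not an estimate. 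To finish you would need either an explicit lower bound on the weight of polynomials divisible by $(D^{q-1}-1)^k g_i$, or a device that avoids the single global reduction altogether, for instance bounding the contribution of each length-$(q-1)$ window of coefficients separately so that the degeneracy $\overline{ug_i}=0$ never arises.
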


Considering $\delta = 2$, the field $\mathbb{F}_{11}$ and $\alpha = 2$ as a primitive element of $\mathbb{F}_{11}$, the Theorem \ref{teorema} gives the generator matrix $\tilde{G}(D)$ considered in (\ref{+}) of an MDS convolutional code, and repeating the coefficients of $\tilde{G}(D)$ in reverse order we define the generator matrix of Theorem \ref{teorema3}. However, this reasonig does not apply for all the codes defined by Theorem \ref{teorema}. In particular Theorem \ref{teorema} gives an MDS $(2,1,2)$ convolutional code over the fields $\mathbb F_9$ and $\mathbb F_{11}$, by considering any primitive element of the field. However, we checked that when we take $$\tilde{G}(D)=G_0+G_1D+G_2D^2$$ defined in Theorem \ref{teorema} by considering $\delta = 2$ and $\mathbb{F}_{9}$, the generator matrix $$G(D) = G_0 +G_1D+G_2D^2+G_2D^3+G_1D^4+G_0D^5$$ defines a $(2,1,5)$ convolutional code which is not MDS, for any chosen primitive element of the fields. Moreover, the same happens when we consider the field $\mathbb F_{11}$ and a primitive element of $\mathbb F_{11}$ different from $2$.
\end{remark}

%The construction considered in Theorem \ref{teorema3} is strictly connected with the chosen field and the corresponding primitive element and represents an extension to the Theorem \ref{teorema}. 

%If we apply the same construction, considering the convolutional code $\mathcal{C}$ with encoder $$G(D) = G_0 +G_1D+G_2D^2+G_2D^3+G_1D^4+G_0D^5 \in \mathbb{F}_{q}[D]^{1 \times 2}$$ where $\tilde{G}(D) = G_0 + G_1D+G_2D^{2} \in \mathbb{F}_{q}^{1 \times 2}[D]$ is the encoder defined in (\ref{encoder}), then $\mathcal{C}$ would not be MDS for $q=9, 13$ or for $q=11$ when $\alpha$ is a primitive element of $\mathbb{F}_{11}$ different from $2$.

\section*{Acknowledgements}
This work is supported by The Center for Research and Development in Mathematics and Applications (CIDMA) through the Portuguese Foundation for Science and Technology (FCT), UIDB/04106/2020 and UIDP/04106/2020. The work of the first author was also supported by FCT grant UI/BD/151186/2021.

\end{document}